\documentclass{article}
\usepackage{graphicx}
\usepackage{mathtools,amssymb,amsmath,amsthm}
\usepackage{algorithmicx,algorithm}
\usepackage[noend]{algpseudocode}
\usepackage{cleveref}
\usepackage{multirow}
\usepackage{subcaption}

\newtheorem{definition}{Definition}
\newtheorem{lemma}{Lemma}
\newtheorem{theorem}{Theorem}
\newcommand{\size}[1]{\left\lvert#1\right\rvert}

\DeclareMathOperator{\dist}{d}

\DeclareMathOperator{\polylog}{polylog}

\begin{document}
\title{A $2$-Approximation Algorithm for Data-Distributed Metric $k$-Center}
\author{Sepideh Aghamolaei and Mohammad Ghodsi\thanks{Department of Computer Engineering, Sharif University of Technology,
Tehran, Iran.}}

\maketitle
\begin{abstract}
In a metric space, a set of point sets of roughly the same size and an integer $k\geq 1$ are given as the input and the goal of data-distributed $k$-center is to find a subset of size $k$ of the input points as the set of centers to minimize the maximum distance from the input points to their closest centers. Metric $k$-center is known to be NP-hard which carries to the data-distributed setting.

We give a $2$-approximation algorithm of $k$-center for sublinear $k$ in the data-distributed setting, which is tight. This algorithm works in several models, including the massively parallel computation model (MPC).
\end{abstract}

\section{Introduction}
Data-distributed models are theoretical tools for designing algorithms for data center and cloud environments. In these settings, medium-sized chunks of data are stored in different servers that work in parallel rounds and can communicate over a relatively fast network. One of these models is the massively parallel computation model (MPC)~\cite{beame2013communication}. A simpler model is composable coresets~\cite{indyk2014composable} which is closer to the notion of coresets from computational geometry and which is a powerful enough tool to get approximation algorithms for some problems.
We discuss approximation algorithms for a clustering problem called metric $k$-center in such models, as the problem is $\mathbf{NP}$-hard to solve exactly in polynomial time. For more details on the definitions and the literature, see \Cref{sec:pre}.
The ultimate goal is to achieve the best possible approximation for metric $k$-center using a constant number of MPC rounds.

For metric $k$-center in MPC, a $4$-approximation algorithm based on Gonzalez's algorithm~\cite{gonzalez} with $2$ rounds exists that first computes a composable coreset and then finds a solution on the result~\cite{kcenter1}\footnote{The paper does not directly mention composable coresets but the idea is the same.}.
This is twice the approximation ratio of the sequential algorithm used as its subroutine.
A MPC algorithm with a logarithmic number of rounds and approximation ratio $2+\epsilon$, for any $\epsilon>0$ exists that uses geometrical guessing based on the value of the diameter and then random sampling to find a set of candidate centers~\cite{im2015fast}. Using the $4$-approximation instead of the diameter improves the round complexity to a constant amount that depends on $\epsilon$. A recent result claims the same approximation ratio~\cite{zarrabi}.

For a parallel algorithm, the total time complexities in the processors which is also equal to the time complexity of the sequential execution of that algorithm is called the work of the algorithm.
Because of the constraints of the model, which are reviewed in \Cref{sec:pre}, all MPC algorithms have near-linear work.
For large enough values of $k$, the time complexity of Gonzalez's algorithm which is $O(nk)$ becomes near quadratic.
An $O(\log \log \log n)$-approximation algorithm with $k(1+o(1))$ centers, $O(\log \log n)$ rounds and $O(n^{1+\epsilon}\polylog n)$ work has been presented~\cite{bateni21} that improves this bound on the work of the algorithm for $k=\Omega(n^{1-\epsilon})$, for a constant $\epsilon$. We focus on cases where the output fits inside the memory of a single machine in MPC, that is for values $k$ that are sublinear in $n$. This allows us to compare the approximation ratio of our algorithm with existing sequential algorithms.

Our approach is to add indices to input points and communicate partial solutions to all machines to allow each set to recover the same solution as other sets locally and independently. Previous works on this problem compute the partial solutions and work on the resulting subset of input points which results in a worse approximation ratio for the cost of clustering.

In \Cref{sec:defs}, we discuss definitions and previous work on the problem. In \Cref{sec:contrib}, we discuss our results and compare them with the state of the art on this problem. We denote the size of the input with $n$.

\subsection{Contributions}\label{sec:contrib}
A rough sketch of our main result is a two-phase algorithm that first computes a composable coreset for $k$-center and then uses this coreset to solve $k$-center by coordinating the machines using a fixed ordering over the coreset from the first phase.
This improves the approximation factor of $k$-center in MPC to $2$.

A comparison of the results for $k$-center is given in \Cref{table:results}.
\begin{table}[h]
\centering
\begin{tabular}{|l|l|l|}
\hline
Model & Approximation ratio & Size\\
\hline
\hline
 \multirow{2}{*}{{MPC}}& $4$~\cite{kcenter1} & $O(kL)$\\
 & $2+\epsilon$~\cite{im2015fast,kcenter1} (randomized) & $O(k\sqrt{n}\log n)$ \\
 & $2+\epsilon$~\cite{zarrabi} & $O(kn\polylog (n))$ \\
 & $2$ & $O(k^2L)$\\
\hline
any polynomial-time algorithm & $\geq2$~\cite{vazirani2013approximation} & -\\
\hline
\end{tabular}
\caption{A summary of results on approximation algorithms for metric $k$-center in MPC assuming $k$ is sublinear. Here, $\epsilon$ is an arbitrary positive constant.}
\label{table:results}
\end{table}

Throughout the paper, we assume $k=o(n)$, because if $k=n$, then, it is enough to simply return all the input points as the output. If $k=\Theta(n)$, then it is relatively meaningless to find a coreset for the problem. Also, $k$-center means metric $k$-center, even if not stated.
We assume algorithms used as subroutines in composable coresets are in $\mathrm{P}$ (computable in polynomial time), even if it is not repeated in the definitions.

\subsection{Preliminaries}\label{sec:defs}\label{sec:pre}
First, we review the definition of a clustering problem called metric $k$-center, then, we discuss two data-distributed models called massively parallel computations (MPC) and composable coresets.
Both sequential approximation algorithms for metric $k$-center use proofs based on maximal independent set, disk graph, dominating set, and triangle inequality property of metric spaces which we also discuss.
Finally, we review some graph theory results related to a family of graphs called expander graphs with applications in designing MPC algorithms.
\subsubsection{Metric $k$-Center}
Given a set of $n$ points in a metric space with distance function $d$ and an integer $k$, the $k$-center~\cite{vazirani2013approximation} problem asks for a subset of $k$ input points ($S$) called centers such that the maximum of the distances from an input point to its nearest center, denoted by $r$ is minimized. Formally,
\[
k\text{-center}(S)=\min_{C\subset S, |C|\leq k} \max_{p\in S} \min_{q\in C} d(p,q),
\]
where $d$ is a metric. Metric $k$-center has $2$-approximation algorithms with running times $O(nk)$~\cite{gonzalez} (Gonzalez's algorithm) and $O(n^2\log n)$~\cite{hochbaum1985best,vazirani2013approximation} (parametric pruning). Gonzalez's algorithm is a greedy algorithm that adds the farthest point as a center and repeats this process until $k$ centers are found. Parametric pruning tries all pairwise distances between points as candidates for the optimal radius $(r)$ in increasing order by putting disks of radius $r$ on the input points and removing the points covered by these disks until all points are covered; The algorithm fails if more than $k$ points are chosen as the centers.
A tight example for the approximation factor $2$ is given in~\cite{vazirani2013approximation}.

\subsubsection{Disk Graph}
For a set of shapes in the plane, the {\em intersection graph} has a vertex for each shape and an edge between a pair of vertices whose corresponding shapes intersect.
The {\em disk graph} of radius $r$ is the intersection graph of disks of radius $r$ centered at input points. Let $H(r,S)$ denote the disk graph of radius $r$ on a point set $S$. In metric spaces, this is the graph with $S$ as its vertices and an edge between a pair of points if their distance is at most $r$.

\subsubsection{Composable Coresets}
A set of coresets computed independently whose union gives an approximation of some measure $f$ of the whole point set (the union of the points in the sets) that is to be optimized is called a composable coreset~\cite{indyk2014composable}. Formally, for a set of sets $S_1,S_2,\ldots,S_L$, and a coreset construction algorithm $g$, assuming the problem is a minimization problem, an $\alpha$-approximation composable coreset satisfies
\[
f(S_1\cup \cdots \cup S_L) \leq f(g(S_1) \cup \cdots \cup g(S_L)) \leq \alpha  f(S_1\cup \cdots \cup S_L).
\]
So, a composable coreset is defined with the pair $(f,g)$.

\subsubsection{Massively Parallel Computation (MPC)}
A model with distributed data called the massively parallel computation (MPC) model~\cite{beame2013communication}. In MPC the space (memory) restrictions are as follows: each set (machine) has a sublinear size, the number of machines is sublinear, and the total memory is linear. The time restrictions of one round of MPC computation are that each machine is allowed to perform a polynomial-time computation independently from other machines and at the end of the round, the machines can send one-way communications to each other. We discuss MPC algorithms with a constant number of rounds.

Composable coresets of sublinear size with sublinear-size sets $S_i$, for $i=1,2,\ldots,L$ and coreset functions $g$ that take linear space and polynomial time are in MPC using a single-round MPC algorithm that computes $g(S_i)$ in each machine in parallel and then sends the results to one of the machines\footnote{The MapReduce implementation takes two rounds: one round to compute the coresets and change the keys of their results to machine $1$ and another one to send the mapped points from the machines to the first machine.}.

\section{A MPC Algorithm for $k$-Center}
First, we show that a composable coreset using any $2$-approximation algorithm for $k$-center gives a composable coreset with approximation factor $2$. Then, we show that if a modified version of the parametric pruning algorithm for $k$-center is used it would be possible to extract the same solution from the resulting composable coreset in all machines.

A covering of a point set with a set $C$ of balls of radius $r$ means each point of $S$ falls inside at least one ball.
We use a more restricted definition of covering in metric spaces in \Cref{def:cover}.
\begin{definition}[Center-Covered]\label{def:cover}
In a metric space, if a set $C$ with $\size{C}=k$ for a radius $r$ has the property that all the points of $S$ are within distance at most $r$ of a point $C$, we say that $C$ center covers $S$ with radius $r$. 
\end{definition}

In \Cref{lemma:2r}, we show that in a metric space, it is enough to keep one point from each cluster in a $k$-center solution to achieve a $2$-approximation of the cost of that solution. 
\begin{lemma}\label{lemma:2r}
In a metric space, if a set $C$ has one point from each cluster of a $k$-center solution of radius $r$, then, $C$ contains a $k$-center solution with radius at most $2r$. 
\end{lemma}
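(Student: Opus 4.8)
The plan is to show that the representatives themselves already form the desired solution. Let the given $k$-center solution of radius $r$ be realized by centers $o_1,\dots,o_k$, and let $S_1,\dots,S_k$ be the induced clusters, so that every $p\in S_i$ satisfies $d(p,o_i)\le r$. By hypothesis $C$ contains at least one point from each cluster; I would pick one such representative $c_i\in C\cap S_i$ for every nonempty cluster and set $C'=\{c_1,\dots,c_k\}$. Then $C'\subseteq C$ and $\size{C'}\le k$, so it suffices to argue that $C'$ center covers $S$ with radius $2r$ in the sense of \Cref{def:cover}.

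The core of the argument is a single application of the triangle inequality. Take an arbitrary $p\in S$ and let $S_i$ be the cluster containing it. Since $p$ and the representative $c_i$ lie in the same cluster, both are within distance $r$ of the same center $o_i$, namely $d(p,o_i)\le r$ and $d(c_i,o_i)\le r$. Hence
\[
d(p,c_i)\le d(p,o_i)+d(o_i,c_i)\le r+r=2r,
\]
so $p$ is covered by the point $c_i\in C'$ within radius $2r$. As $p$ was arbitrary, $C'$ is a $k$-center solution of radius at most $2r$ contained in $C$, which is exactly the claim.

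I do not expect a genuine obstacle here: the statement reduces to the standard observation that merging two radius-$r$ covering balls that share a center costs only a factor of two. The one point that needs care is purely definitional, namely fixing what a cluster is (the assignment of each point to a center within distance $r$) and checking that the chosen representative $c_i$ is attached to the \emph{same} center $o_i$ as the point $p$ it must cover; once the clusters are pinned down, the rest is immediate. A minor bookkeeping remark is that empty clusters can simply be ignored, which only decreases $\size{C'}$ and keeps it at most $k$.
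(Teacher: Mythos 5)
Your proof is correct and follows essentially the same route as the paper's: both pick a representative from each cluster and bound $d(p,c_i)\le d(p,o_i)+d(o_i,c_i)\le 2r$ via the triangle inequality through the cluster's center. Your write-up is in fact more careful than the paper's (explicitly fixing the cluster assignment, verifying $\size{C'}\le k$, and handling empty clusters), but the mathematical content is identical.
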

\begin{proof}
Let $M$ be the set of points from each of the clusters.
Based on the definition of $k$-center and center cover (\Cref{def:cover}), a solution $C'$ to $k$-center with radius $r$ is a center cover of radius $r$, so, $\forall m\in M, c\in C, \dist(c,m) \leq r \Rightarrow \forall p: \dist(c,p) \leq r, \dist(m,p) \leq \dist(m,c)+\dist(c,p) \leq 2r.$. The last inequality is due to the triangle inequality property of triples of points in a metric space.
\end{proof}

\begin{lemma}\label{lemma:cover}
A dominating set $D$ of the disk graph $H(r,S)$ center-covers the points of $S$ with radius $r$.
\end{lemma}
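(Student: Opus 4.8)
The plan is to unfold the two definitions involved—dominating set and disk graph—and verify the distance condition of \Cref{def:cover} one point at a time. First I would recall that in a metric space the disk graph $H(r,S)$ has an edge between $p,q\in S$ exactly when $\dist(p,q)\le r$, and that a dominating set $D$ of a graph is a vertex subset such that every vertex lying outside $D$ has at least one neighbor inside $D$.

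Then I would fix an arbitrary point $p\in S$ and split into two cases. If $p\in D$, then $p$ is within distance $0\le r$ of the point $p\in D$ itself, so it is covered. If $p\notin D$, the dominating-set property supplies a vertex $d\in D$ adjacent to $p$ in $H(r,S)$; by the edge criterion of the disk graph this adjacency means precisely $\dist(p,d)\le r$. In both cases $p$ lies within distance $r$ of some member of $D$, which is exactly the center-cover condition. Since $p$ was arbitrary, $D$ center-covers all of $S$ with radius $r$.

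The argument is in essence a direct translation between the combinatorial notion of domination and the geometric notion of covering, so I do not anticipate any genuine computational obstacle. The one subtlety I would be careful about is the boundary case $p\in D$: the standard definition of a dominating set only guarantees neighbors for the vertices \emph{outside} $D$, so I must separately observe that points of $D$ cover themselves at distance $0$ rather than trying to invoke adjacency. I would also remark that \Cref{def:cover} is phrased with the cardinality constraint $\size{C}=k$, whereas the conclusion here relies solely on the distance property; the size of $D$ is immaterial to this lemma and is accounted for elsewhere.
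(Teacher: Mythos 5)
Your proposal is correct and matches the paper's own proof: both unfold the definitions and split into the same two cases, $p\in D$ (covered at distance $0$) and $p\notin D$ (covered at distance at most $r$ via a dominating neighbor in $H(r,S)$). Your side remark that the cardinality clause $\size{C}=k$ of \Cref{def:cover} plays no role here is a fair observation, but it does not change the argument, which is essentially identical to the paper's.
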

\begin{proof}
Since $D$ is a dominating set of $H(r,S)$, based on the definition of dominating sets, any point is either in $D$ (center covered with radius $0$) or adjacent to a vertex in $D$ (center covered with radius $r$). So, $D$ center covers $S$ with radius $\max(0,r)=r$.
\end{proof}

\begin{definition}[Permutation-Stable Algorithm]\label{def:perm}
Consider an algorithm $A$ that takes a permutation $\phi$ in addition to its input and returns the same solution if $\phi$ is not changed.
\end{definition}

The ordering of the points can be implemented by adding an extra dimension that is ignored when computing the distances.
\begin{algorithm}[h]
\begin{algorithmic}[1]
\Require{A point set $S$, an order $\phi$ on S, an integer $k$}
\Ensure{A set of pairs $(\rho,C)$ where $C$ center-covers $S$ with radius $\rho$}
\State{$W\gets\emptyset$}
\State{$\kappa\gets k$}
\State{Index the elements of $S$ in the order of $\phi$ and let $s_i=s\in S:\phi(s)=s_i$.}
\State{$R\gets\{\dist(p,q)\mid p,q\in S\}$}
\State{Sort $R$ increasingly.}
\While{$R\neq \emptyset$}
\State{$\rho\gets \min_{r\in R} r$}
\State{$C\gets \emptyset$}
\For{$i=1,\ldots,n$}
\If{$\size{C}<k$ and $s_i$ is not center-covered by $C$ with radius $\rho$}
\State{$C\gets C\cup \{s_i\}$}
\EndIf 
\EndFor
\If{$\size{C}= \kappa$}
\State{$W\gets W\cup (\rho,C)$}
\State{$\kappa\gets \kappa-1$}
\EndIf
\State{$R\gets R\setminus \{\rho\}$}
\EndWhile
\end{algorithmic}
\caption{Permutation-Stable Parametric Pruning for $k$-Center}\label{alg:param}
\end{algorithm}
\Cref{alg:param} takes $O(n)$ time to index the points, $O(n^2)$ time to compute the pairwise distances to build the set $R$, $O(n^2 \log n)$ time to sort them, and $O(n^3)$ time in the \textbf{while} loop to check if $k$ centers exist that center-cover the input points, for each element of $R$.

\begin{lemma}\label{lemma:param}
\Cref{alg:param} finds the first maximal independent set of the disk graph $H(r,S)$ with $r\leq 2r*$  in the order of $\phi$, where $r$ is the radius $(\rho)$ for which the algorithm terminates.
\end{lemma}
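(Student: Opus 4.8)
The plan is to read the inner \textbf{for} loop of \Cref{alg:param} as the canonical greedy construction of a maximal independent set (MIS), and then to bound the radius at which the loop first returns a complete such set. For a fixed candidate radius $\rho$, the loop processes $s_1,\ldots,s_n$ in the order $\phi$ and inserts $s_i$ into $C$ exactly when $s_i$ is not center-covered by the current $C$ with radius $\rho$, i.e.\ when $\dist(s_i,c)>\rho$ for every $c\in C$, which is precisely the condition that $s_i$ is nonadjacent in $H(\rho,S)$ to every already-selected vertex. Ignoring the cap $\size{C}<k$, this produces the unique first MIS of $H(\rho,S)$ in the order $\phi$: the set is independent because each inserted vertex is nonadjacent to all earlier ones, and it is maximal because every skipped vertex was center-covered and hence adjacent to a selected vertex. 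By \Cref{lemma:cover} this set, being a dominating set of $H(\rho,S)$, center-covers $S$ with radius $\rho$.

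Next I would show the cap does not distort the reported set at the relevant radius. Let $m(\rho)$ denote the size of the uncapped first MIS of $H(\rho,S)$. The test $\size{C}<k$ can block an insertion only after $\size{C}$ reaches $k$; hence whenever $m(\rho)\le k$ the cap never prevents a necessary insertion, and the loop returns the full first MIS (in the boundary case $m(\rho)=k$ the $k$-th representative is inserted while $\size{C}=k-1$, after which every remaining point is already covered). The radius at which the algorithm yields a genuine $k$-center solution is therefore the smallest candidate $\rho\in R$ with $m(\rho)\le k$, and at that radius $C$ is exactly the first MIS of $H(\rho,S)$ in the order $\phi$, as the statement asserts.

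It remains to bound this radius by $2r^*$, where $r^*$ is the optimal $k$-center radius. Fixing an optimal solution with $k$ clusters of radius $r^*$, the triangle inequality gives that any two points of a common cluster are within distance $2r^*$, so in $H(2r^*,S)$ each optimal cluster induces a clique; consequently every independent set of $H(2r^*,S)$ contains at most one point per cluster and thus has at most $k$ vertices, giving $m(2r^*)\le k$. Since the edge set of $H(\cdot,S)$ is unchanged between consecutive pairwise distances, the largest $d\in R$ with $d\le 2r^*$ satisfies $H(d,S)=H(2r^*,S)$ and hence $m(d)\le k$, so the smallest candidate radius admitting a valid MIS cover is at most $d\le 2r^*$. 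Combined with the previous paragraphs, this exhibits the reported set as a first MIS of $H(r,S)$ with $r\le 2r^*$; this is also where \Cref{lemma:2r} could be invoked, since a single representative per optimal cluster already certifies a cover of radius at most $2r^*$.

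The step I expect to be the main obstacle is pinning down precisely what ``the radius for which the algorithm terminates'' means, given that the \textbf{while} loop scans all of $R$ and the bookkeeping with $\kappa$ records several pairs $(\rho,C)$; I would need to confirm that the pair relevant to the $k$-center guarantee is exactly the smallest-radius complete MIS cover. Alongside this, the two technical points requiring care are the discretization of candidate radii (the interval argument that $H(\cdot,S)$ is constant between consecutive distances) and the boundary behavior of the cap when $m(\rho)=k$ versus $m(\rho)>k$, so that the MIS reported at the terminating radius is never truncated.
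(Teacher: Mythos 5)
Your proof is correct and follows the same conceptual route as the paper's own proof: read the inner loop as the greedy construction of the lexicographically first maximal independent set of $H(\rho,S)$ in the order $\phi$, and bound the relevant radius by $2r^*$. The difference is one of completeness rather than strategy. The paper's proof is a three-sentence sketch that delegates the radius bound to the cited analysis of parametric pruning ("the proof of the approximation ratio of parametric pruning shows the solution is a maximal independent set of the disk graph of radius $r$"), whereas you prove everything in place, and three of your additions have no counterpart in the paper and are genuinely needed: (i) the verification that the cap $\size{C}<k$ never truncates the reported set once the uncapped first MIS has size at most $k$ (the paper never mentions the cap at all); (ii) the discretization argument that $H(\cdot,S)$ has constant edge set between consecutive pairwise distances, so restricting candidate radii to $R$ loses nothing against the continuum value $2r^*$; and (iii) the clique-per-optimal-cluster argument giving $m(2r^*)\leq k$, which the paper only spells out later, inside the proof of \Cref{lemma:zilla}. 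Your flagged worry about what "the radius for which the algorithm terminates" means is also well placed, and sharper than you may realize: under the literal bookkeeping of \Cref{alg:param}, the pair recorded with $\size{C}=k$ is generically the \emph{truncated} set produced at the smallest element of $R$, where the capped greedy first reaches size $k$ without covering $S$ (since $\kappa$ is decremented immediately afterwards, the later, genuine size-$k$ cover is never recorded under that $\kappa$). The lemma is true only under the reading you adopt — that the relevant output is the first MIS at the smallest $\rho\in R$ whose uncapped greedy has size at most $k$ — and that is also the reading the downstream use in \Cref{alg:exist} implicitly requires, so your reinterpretation is the right one.
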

\begin{proof}
The algorithm visits the vertices in the order of $\phi$, so, by construction, it finds the first maximal independent set.
In the parametric pruning algorithm for $k$-center, the order is arbitrary, so, using order $\phi$ is allowed. The proof of the approximation ratio of parametric pruning shows the solution of the algorithm is a maximal independent set of the disk graph of radius $r$.
\end{proof}

\begin{definition}[Ordered Composable Coresets]\label{def:ordered}
Consider a composable coreset $T$ on a set of sets $S_1,S_2,\ldots,S_L$ for function $f:S\rightarrow \mathbb{R}^{\geq 0}$ using a permutation-stable coreset algorithm $g$ and an ordering defined on the input points $\phi:\cup_{i=1}^L S_i\rightarrow \{1,2,\ldots, \size{S}\}$. We define an ordered composable coreset as the tuple $C$ of the elements of set $T$ sorted in the order of increasing $\phi$.
An ordered composable coreset is defined as the tuple $(f,g,\phi)$.
\end{definition}

In \Cref{lemma:recovery}, we show that a $k$-center solution can be extracted from an ordered composable coreset if the coreset is a $k$-center solution.
\begin{lemma}\label{lemma:recovery}
For any ordered composable coreset $C$ for $k$-center where $g$ computes a solution for $k$-center as the coreset, there is a set $X\subset C$ of size $k$ and a radius $r\geq 0$ that center covers the points of $S_i$, for all $i=1,\ldots,L$.
\end{lemma}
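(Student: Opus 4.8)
The plan is to exhibit $X$ by applying the permutation-stable parametric pruning of \Cref{alg:param} to the ordered coreset $C$ itself and then to argue that its output covers all of $S$ by composing two coverings through the triangle inequality. Since $g$ produces a $k$-center solution on each machine, each $g(S_i)$ center-covers $S_i$ with some radius $r_i$, so writing $r'=\max_i r_i$ the union $C=\bigcup_{i=1}^L g(S_i)$ center-covers $S=\bigcup_{i=1}^L S_i$ with radius $r'$: every $p\in S_i$ lies within $r_i\le r'$ of a point of $g(S_i)\subseteq C$.

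First I would run \Cref{alg:param} on $C$ with the ordering $\phi$ inherited from \Cref{def:ordered}. By \Cref{lemma:param} its output $X$ is the first maximal independent set of the disk graph $H(\rho,C)$ in the order of $\phi$, where $\rho$ is the terminating radius. A maximal independent set is always a dominating set, since an uncovered non-member could be added without breaking independence, contradicting maximality; hence \Cref{lemma:cover} applies and $X$ center-covers $C$ with radius $\rho$. Both $\rho$ and $\size{X}\le k$ are well defined because $C$ is a finite subset of $S$ and can be covered by $k$ of its own points (at worst with radius its diameter). If $\size{X}<k$ I would pad $X$ with further points of $C$ (available since $\size{C}\ge k$ in the intended regime $k=o(n)$), which only decreases the covering radius, reaching $\size{X}=k$ exactly.

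Next I would chain the two coverings. Fix any $p\in S_i$. There is $c\in C$ with $\dist(p,c)\le r'$, and since $X$ center-covers $C$ there is $x\in X$ with $\dist(c,x)\le\rho$; the triangle inequality then gives $\dist(p,x)\le\dist(p,c)+\dist(c,x)\le\rho+r'$, exactly the style of argument used in \Cref{lemma:2r}. Hence $X$ center-covers every $S_i$ with the single radius $r=\rho+r'\ge 0$, which is the claim.

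The main obstacle I anticipate is conceptual rather than computational: the set $X$ is forced to lie inside the coreset $C$ rather than inside the full input $S$, so covering $S$ cannot be done directly and must pass through the intermediate covering of $S$ by $C$. Making this precise requires two facts to line up, namely that $g$'s per-machine output genuinely center-covers its own $S_i$ (so that $C$ covers $S$ at all) and that the pruning on $C$ returns a dominating set via the maximal-independent-set characterization, after which the triangle inequality does the rest. I would also note that the extraction is consistent across machines: because $C$ is ordered by $\phi$ and \Cref{alg:param} is permutation-stable (\Cref{def:perm}), every machine holding $C$ reconstructs the identical $X$, which is what makes the recovery usable in the MPC setting.
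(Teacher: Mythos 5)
Your proposal is correct as a proof of the lemma as literally stated, but it takes a genuinely different route from the paper. The paper never clusters $C$ by itself: its (terse) proof rests on permutation stability, arguing that when $g_{\phi}$ is applied to each augmented set $S_i\cup C$ (with $\phi$ prioritizing the points of $C$), the centers are drawn from $C$, i.e., $g_{\phi}(S_i\cup X)\subset X$, so all $L$ local solutions are subsets of one common size-$k$ solution and their union has size at most $k$. You instead build $X$ by a single run of \Cref{alg:param} on $C$ and chain two coverings---$C$ covers each $S_i$ with radius $r'=\max_i r_i$, and $X$ covers $C$ with radius $\rho$ via the independent-set/dominating-set step of \Cref{lemma:param} and \Cref{lemma:cover}---obtaining $r=\rho+r'$ by the triangle inequality. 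Since the statement imposes no bound on $r$, this is valid, and in some ways more self-contained than the paper's proof; but it does not yield what the paper uses the lemma for downstream, namely that the machines' outputs on $S_i\cup C$ are consistent subsets of the same $X$, which is what keeps the radius at $2r^*$ in the main theorem. Tracing constants in your chaining gives only $r'\leq 2r^*$ and $\rho\leq 4r^*$ (pruning is a $2$-approximation on $C$, whose own $k$-center cost can reach $2r^*$), so your bound degrades to $6r^*$ and could not substitute in the $2$-approximation argument. One small repair: when the cap $\size{C}<k$ in \Cref{alg:param} binds, the greedily chosen set need not be a maximal independent set and hence need not dominate, so you should take $\rho$ to be a recorded radius at which all of $C$ is actually covered. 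Your closing observation that every machine reconstructs the identical $X$ from the ordered $C$ is correct, and is precisely the paper's permutation-stability mechanism, applied to $C$ alone rather than to $S_i\cup C$.
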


\begin{proof}
Based on the assumption in the statement of the lemma and the definition of the coreset construction algorithm in the definition of ordered composable coresets, $g_{\phi}$ is a permutation-stable algorithm for $k$-center. So, when $g_{\phi}$ is applied to sets $S_i\cup C$, it finds a subset of the centers in $X$:
\[
g_{\phi}(S_i\cup X)\subset X.
\]
In an ordered composable coreset for $k$-center, if all the solutions $g_{\phi}(S_i)$ are subsets of a solution $X$ of size $k$, then the size of the union of the coresets is at most $k$:
\[
\cup_{i=1}^L g_{\phi}(S_i) \subset X \Rightarrow \size{\cup_{i=1}^L g_{\phi}(S_i) } \leq \size{X}=k.
\]
\qed
\end{proof}

We claim that \Cref{alg:exist} finds a $2$-approximation solution for $k$-center in MPC using an ordered composable coreset based on the parametric pruning algorithm for $k$-center. From \Cref{line:find} to the end of the algorithm, the algorithm finds the threshold for the prefixes of the set of centers that have to be chosen and the radius. From \Cref{line:second} to the end of the algorithm runs locally in the first machine.

\begin{algorithm}
\caption{A $2$-Approximation for Metric $k$-Center in MPC}\label{alg:exist}
\begin{algorithmic}[1]
\Require{Sets $S_i$, for $i=1,\ldots,L$, integer $k$}
\Ensure{A set of centers $T$ and a radius $\rho$}
\State{Assume an ordering $\phi$ on the points.}\label{line:1}
\State{Run \Cref{alg:param} on each $S_i$, and let $r_i$ be the cost and $C_i$ be the set of centers for $\size{C_i}=k$.}
\State{Let $C=\cup_{i=1}^L C_i$.}
\State{Send $C$ to all machines.}
\State{Update $\phi$ to prioritize the points of $C$ to $S\setminus C$ and keep the order of points in $C$.}\label{line:2}
\For{$\kappa=1,2,\ldots,k$ \textbf{in parallel}}
\State{Run \Cref{alg:param} on each set $S_i\cup C$ in the order of $\phi$ and let $T_{i,\kappa}$ be its set of centers and $\rho_{i,\kappa}$ be its radius.}\label{line:second}
\EndFor
\State{Send $T_{i,\kappa}$ and $\rho_{i,\kappa}$ to the first machine.}
\State{$R\gets \cup_{i=1}^L \cup_{\kappa=1}^{k} \{\rho_{i,\kappa}\}$}\label{line:find}
\State{Sort $R$ increasingly.}
\For{$\rho\in R$ in increasing order}
\For{$i=1,\ldots,L$ \textbf{in parallel}}
\State{$t_i\gets \max_{\substack{\kappa=1,\ldots,k,\\ r_{i,\kappa}\leq \rho}} \kappa$ }
\EndFor
\If{$\size{\cup_{i=1}^L T_{i,t_i}}\leq k$}
\State{\textbf{return} $T=\cup_{i=1}^L T_{i,t_i}$ and $\rho$}
\EndIf
\EndFor
\end{algorithmic}
\end{algorithm}
\Cref{alg:exist} takes $3$ rounds in MPC: the beginning of the rounds are \Cref{line:1,line:2,line:find}. The size of the coreset computed in \Cref{line:second} is $\size{\cup_{i=1}^L \cup_{\kappa=1}^k T_{i,\kappa}}=\sum_{i=1}^L \sum_{\kappa=1}^{L} \kappa=O(Lk^2)$ which dominates the rest of the communications used in the algorithm.

\begin{lemma}\label{lemma:zilla}
The set $C$ in \Cref{alg:exist} contains a $2$-approximation coreset for the metric $k$-center of $S$.
\end{lemma}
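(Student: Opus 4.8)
The plan is to prove the cleaner statement that the union $C=\bigcup_{i=1}^{L} C_i$ center-covers all of $S$ with radius at most $2r^*$, where $r^*=\OPT(S)$ is the optimal $k$-center radius of $S$, and then to extract from $C$ a subsolution of size $k$ with the same cost using \Cref{lemma:2r}. The reason to route through a covering bound first is that it isolates the one place where a better-than-$4$ analysis is possible.

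First I would bound the local radii directly against the global optimum rather than against $\OPT(S_i)$. Fix an optimal solution $O$ of $S$ of radius $r^*$; by the triangle inequality its at most $k$ clusters each have diameter at most $2r^*$. On machine $i$, the intersections $P_j\cap S_i$ therefore form at most $k$ cliques in the disk graph $H(2r^*,S_i)$, so every independent set of $H(2r^*,S_i)$ has at most $k$ vertices. Hence the threshold $2r^*$ is feasible for the parametric-pruning search, and by \Cref{lemma:param}, \Cref{alg:param} terminates on $S_i$ with a radius $\rho_i\le 2r^*$ and a maximal independent set $C_i$ of $H(\rho_i,S_i)$. Since a maximal independent set is dominating, \Cref{lemma:cover} gives that $C_i$ center-covers $S_i$ with radius $\rho_i\le 2r^*$, and taking the union over $i$ shows that $C$ center-covers $S$ with radius at most $2r^*$. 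The point to emphasize is that this bound is $2r^*$ and not the $4r^*$ one gets by composing the local $2$-approximation with $\OPT(S_i)\le 2r^*$: the saving comes from charging the independence bound against the \emph{global} clusters of $O$ restricted to $S_i$.

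To turn the covering set $C$ into a solution of size $k$, I would invoke \Cref{lemma:2r}. Because $C\subseteq S$ and $O$ covers $S$ within $r^*$, every point of $C$ lies within $r^*$ of exactly one optimal center, so $C$ meets at most $k$ of the clusters $P_j$. Choosing one point of $C$ from each cluster it meets yields a set $M\subseteq C$ with $\size{M}\le k$, and \Cref{lemma:2r} then shows that $M$ center-covers $S$ with radius at most $2r^*$; this $M$ is the promised $2$-approximate coreset contained in $C$.

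The step I expect to be the main obstacle is verifying that $M$ covers \emph{every} point of $S$ — equivalently, that $C$ meets every optimal cluster $P_j$, and not merely that $C$ covers $S$. A local maximal independent set could in principle absorb all of $P_j\cap S_i$ through a dominating center in a neighbouring cluster, and a naive representative would then leave a point of a missed cluster at distance up to $3r^*$. I would resolve this by combining the clique structure of each $P_j\cap S_i$ at radius $2r^*$ with the maximality (domination) property of $C_i$, and with the fact that \Cref{alg:exist} retains the size-exactly-$k$ solution at the \emph{smallest} feasible radius, so that the $k$ retained centers are spread across the independent structure rather than collapsed onto one region. Should a clean cluster-meeting argument prove elusive, the fallback is to take as each cluster's representative the specific $C$-point that dominates $o_j$ and to pass $C$ into the permutation-stable extraction of \Cref{lemma:recovery}, which guarantees that the recovered solution still has at most $k$ centers while inheriting the radius bound $2r^*$.
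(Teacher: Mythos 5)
Your first paragraph proves exactly what the paper's own proof of this lemma actually establishes, and by essentially the same mechanism: at threshold $2r^*$ each optimal cluster restricted to $S_i$ forms a clique in $H(2r^*,S_i)$, so any independent set --- in particular the greedy centers of \Cref{alg:param} --- has at most $k$ points, making $2r^*$ feasible locally; hence $\rho_i\leq 2r^*$, and by maximality/domination (\Cref{lemma:cover}) each $C_i$ center-covers $S_i$ within $2r^*$, so $C$ center-covers $S$ within $2r^*$ using at most $kL$ centers. These are precisely the paper's ``two facts''; the paper wraps them in a five-case analysis of $O_j\cap S_i$ together with inductions on $L$ and $k$, but the operative content is identical, and your direct version --- charging the independence bound against the \emph{global} clusters restricted to $S_i$ rather than against $\OPT(S_i)$, which is exactly what separates $2r^*$ from $4r^*$ --- is arguably cleaner than the paper's case analysis.

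The gap is in your second half, and it is the one you flagged yourself: $C$ need not meet every optimal cluster. On each machine containing points of a cluster $O_j$, a $\phi$-earlier point of a neighbouring cluster within distance $\rho_i$ can dominate all of $O_j\cap S_i$, and this can happen simultaneously on every machine that sees $O_j$, so the hypothesis of \Cref{lemma:2r} (one point from \emph{each} cluster) simply fails for $C$; no appeal to ``spreading across the independent structure'' repairs this, since maximal independence constrains pairwise distances among the chosen centers, not which optimal clusters they land in. Your explicit fallback of taking as representative the $C$-point $m$ dominating $o_j$ yields only $\dist(m,p)\leq \dist(m,o_j)+\dist(o_j,p)\leq 2r^*+r^*=3r^*$, not $2r^*$. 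The paper sidesteps all of this: its proof of the lemma stops at the covering statement your first part proves (``radius $2r^*$ suffices to cover the input using $kL$ centers''), and the reduction to $k$ centers at radius $2r^*$ is deferred to the theorem, where $C$ is broadcast, $\phi$ is reprioritized, and the permutation-stable algorithm is rerun on $S_i\cup C$ so that all machines extract consistent prefixes via \Cref{lemma:recovery} --- which is exactly your final fallback. So, read against the lemma as the paper actually uses it, your argument is already complete after the first part; the extraction step you could not close does not belong to this lemma and is only closed (in the paper's architecture) by the theorem's coordination machinery.
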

\begin{proof}
For each cluster $O_j$ of the optimal $k$-center on $S=\cup_{i=1}^L S_i$ with radius $r^*$ and center $o$, we prove one of these cases hold for the set $C$ in \Cref{alg:exist}:
\begin{enumerate}
\item the center $o_j$ of $O_j$ is in $C_i$, i.e., $\exists o_j\in C_i\cap O_j: \forall o\in O_j, \dist(o,o_j)\leq r^*$,
\item one point from $O_j$ is chosen in $C_i$, i.e., $\size{O_j\cap C_i}=1$,
\item the set $O_j\cap S_i$ is covered by a set of points such as $c\in C_i$ with radius $2r^*$,
\item there is only one point of $O_j$ in $S_i$, i.e., $\size{O_j\cap S_i}=1$, or
\item the cluster $O_j$ is missing from $S_i$, i.e., $O_j\cap S_i=\emptyset$.
\end{enumerate}
\Cref{fig:cases} shows these cases.
\begin{figure}[h]
\begin{subfigure}{\textwidth}
\centering
\includegraphics[scale=0.7]{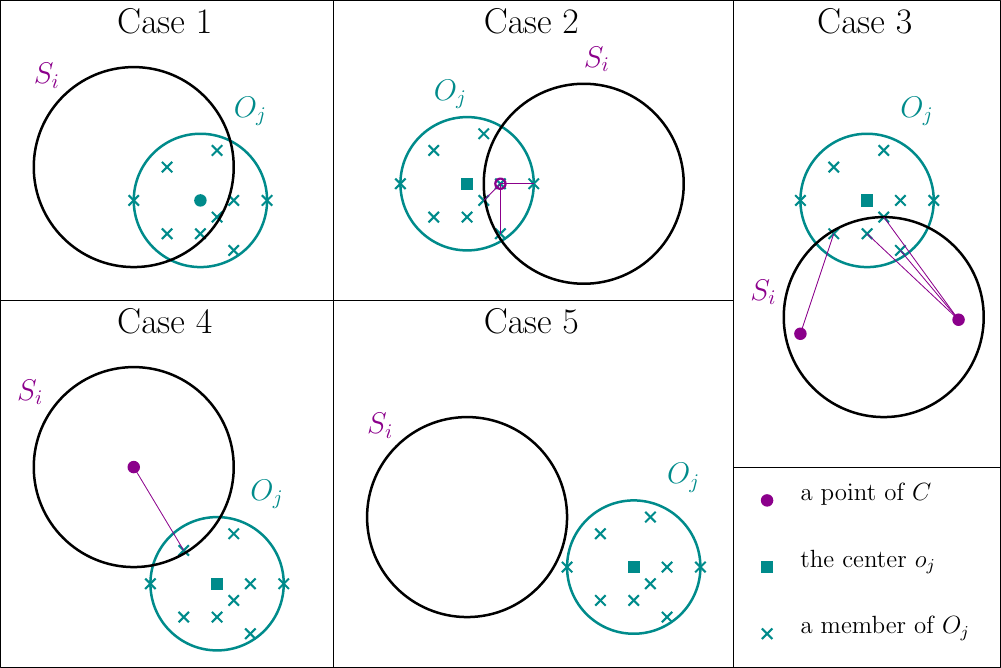}
\end{subfigure}
\begin{subfigure}{\textwidth}
\centering
\includegraphics[scale=0.75]{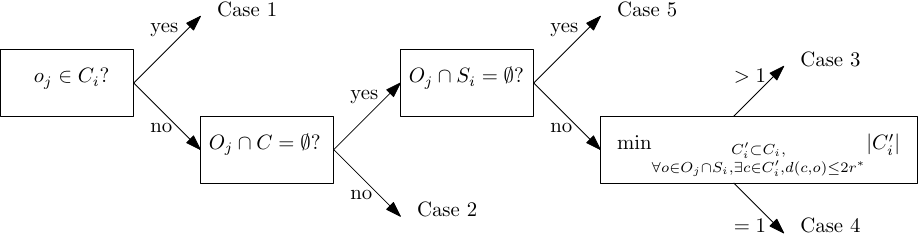}
\end{subfigure}
\caption{All the cases of covering $O_j \cap S_i$ for each set $S_i$ in the input.}\label{fig:cases}
\end{figure}
Assuming the maximum radius of the clustering using $C_i$, for $i=1,2,\ldots,L$ in all of the sets is at most $2r^*$, then this is all the cases because if $O_j\cap S_i\neq \emptyset$, there must be at least one center in $C_i$ that covers $O_j\cap S_i$. It remains to show that the radius of $C$ is at most $2r^*$ and a subset of at most $k$ points of $C$ cover $S=\cup_{i=1}^L S_i$. If we show this for the intersection of each cluster of the optimal solution and each input set, it always holds by induction on $L$. The base case is $L=1$, where there is only one set, which is the same as the serial algorithm.
We use another induction on the number of clusters $(k)$. The base case $(k=1)$ holds, as choosing any point from the input gives a $2$-approximation.

We use two facts in this proof:
\begin{itemize}
\item
When the number of clusters of the optimal solution in a set $S_i$ is less than $k$ and the algorithm chooses $k$ centers, the radius either decreases or remains the same.
\item
At twice the radius of an optimal clustering $(2r^*)$, all the members of each cluster form a clique. So, the subset of points of each optimal cluster in each set $S_i$ form a clique, so, the maximal independent set of the graph has at most as many points as there are clusters of the optimal solution in $S_i$, which is at most $k$. \Cref{lemma:cover} proves any dominating set, including the maximal independent set of a disk graph of radius $2r^*$ center-covers the points with radius $2r^*$.
\end{itemize}

Case 1: Since $o_j \in C_i$,
\[
\forall o\in (O_j\cap S_i), \dist(o,o_j)\leq r^*.
\]

Case 2: There is a point $c\in (O_j\cap C_i)$, so, using the triangle inequality on the points $o\in (O_j\cap C_i)$, $c$ and $o_j$, we have:
\[
\dist(c,o)\leq \dist(c,o_j)+\dist(o,o_j) \leq 2r^*.
\]

Case 3: Since there are no points of cluster $O_j$ in $C_i$, based on the pigeonhole principle, there are two points of the same optimal cluster, so, using the triangle inequality $r_i\leq 2r^*$. Centers in $C_i$ cover all the points of $S_i$, including $O_j\cap S_i$, with radius at most $2r^*$. Also, this case never happens for radius $2r^*$ and more as the set of vertices $C_i$ would not form an independent set which is the output of \Cref{alg:param}. So, in this case, the algorithm would find at most $k-1$ centers.

Case 4: The cluster $O_j$ has only one point $o$ in $S_i$ and $o\ni C_i$. So, at least two points from one of the clusters have been chosen, which gives the radius at most $2r^*$. In this case, the number of centers also decreases as the algorithm would not choose two points from the same optimal cluster for $2r^*$.

Case 5: The set $S_i$ contains none of the points in $O_j$, so, the number of clusters in $S_i$ is $k-1$.

Based on the induction on $k$, the radius $2r^*$ is enough to cover the input using $kL$ centers.
The induction on $L$ completes the proof.
\end{proof}

\begin{theorem}
\Cref{alg:exist} computes a $2$-approximation for $k$-center.
\end{theorem}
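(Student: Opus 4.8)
The plan is to assemble the theorem from the two guarantees already in hand: the quality of the broadcast coreset $C$ and the correctness of the distributed recovery that follows it. First I would fix an optimal $k$-center solution of $S=\cup_{i=1}^L S_i$ with radius $r^*=\OPT$. By \Cref{lemma:zilla}, the set $C$ that is broadcast at the end of the first phase of \Cref{alg:exist} already contains a subset $X\subseteq C$ with $\size{X}\leq k$ that center-covers all of $S$ with radius at most $2r^*$. Thus a feasible solution of radius $2r^*$ lives inside $C$; the entire remaining task is to show that the second phase genuinely recovers such a solution and that the radius it reports does not exceed $2r^*$. This reduces the theorem to (i) a consistency claim about the locally computed pieces and (ii) a correctness claim about the threshold search.

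Next I would analyze the second phase. Because \Cref{line:2} reindexes $\phi$ so that every point of $C$ precedes every point of $S\setminus C$ while preserving the internal order of $C$, and because \Cref{alg:param} is permutation-stable (\Cref{def:perm}, \Cref{lemma:param}), each machine processes the shared prefix $C$ in exactly the same way before it ever touches its private points. Consequently, at any fixed radius the centers that the different machines select from $C$ coincide. Feeding this into \Cref{lemma:recovery}, with $g_\phi$ taken to be permutation-stable parametric pruning applied to $S_i\cup C$, the local solutions are all subsets of one common solution $X\subseteq C$ of size $k$, so their union again has size at most $k$. I expect this coordination step to be the main obstacle: one has to argue carefully that prioritizing $C$ forces the per-machine maximal independent sets to reuse the \emph{same} vertices of $C$, so that gluing the pieces together introduces no double counting that would push $\size{\cup_{i} T_{i,t_i}}$ past the budget $k$.

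Then I would verify the threshold search that begins at \Cref{line:find}. The loop scans candidate radii $\rho\in R$ in increasing order, and for each $\rho$ it selects on every machine a local solution $T_{i,t_i}$ whose recorded radius is at most $\rho$, returning the union $T=\cup_i T_{i,t_i}$ the first time $\size{T}\leq k$. For the covering radius, each $T_{i,t_i}$ is the output of \Cref{alg:param} on $S_i\cup C$, hence a maximal independent set of the corresponding disk graph, so by \Cref{lemma:param} and \Cref{lemma:cover} it center-covers $S_i\cup C$ with radius $\rho_{i,t_i}\leq\rho$; since $T\supseteq T_{i,t_i}$, the returned $T$ center-covers every $S_i$, and therefore all of $S$, with radius $\rho$. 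For the bound on $\rho$, the radius-$2r^*$ solution produced in the previous two steps is attained at a pairwise distance that appears in $R$ and already yields a union of size at most $k$; because the search returns at the first success as $\rho$ increases, the reported radius satisfies $\rho\leq 2r^*$.

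Combining the three steps, \Cref{alg:exist} outputs at most $k$ centers that center-cover $S$ with radius $\rho\leq 2r^*=2\OPT$, which is exactly the claimed $2$-approximation. The delicate point throughout is the interplay between permutation-stability and the reindexing of \Cref{line:2}: it is precisely what lets the independently computed pieces $T_{i,t_i}$ be merged into a single global solution without violating the budget $k$, and pinning this down rigorously is where most of the work should go.
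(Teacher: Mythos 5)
Your proposal is correct and follows essentially the same route as the paper's own proof: coverage of $S$ at radius $2r^*$ from \Cref{lemma:zilla}, cross-machine consistency on the shared prefix $C$ via permutation-stability (\Cref{lemma:param}) combined with \Cref{lemma:recovery} to merge the local solutions within the budget $k$, and a sweep over candidate radii returning at the first feasible $\rho\leq 2r^*$ (a step the paper leaves implicit but you verify explicitly). The only slip is attributing the existence of a size-$k$ covering subset $X\subseteq C$ directly to \Cref{lemma:zilla}, which per the paper only yields coverage with $kL$ centers at radius $2r^*$; since your second step re-derives the size-$k$ merge exactly as the paper does, this misattribution does not affect the argument.
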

\begin{proof}
\Cref{lemma:zilla} proved that $C$ center-covers $S$ with radius $2r^*$ using $kL$ centers.
Case 3 in \Cref{lemma:zilla} is the only case where the number of centers can increase because more than one point is used to cover a cluster, which does not happen for radius $2r^*$.
After $C$ is added to the sets, since it is a permutation-stable algorithm as proved in \Cref{lemma:param}, the points of $C$ are used as centers, so, at most one point from each optimal cluster is chosen: the point that appears first in the order of $\phi$.
Based on \Cref{lemma:recovery}, the prefix of size $k$ of the solution (the union of the solutions in the sets) is that $2$-approximation solution.  
So, in all cases of \Cref{lemma:zilla} that happen for $2r^*$, either one point or no point is chosen.
This means the solutions of $\kappa$-center for $\kappa=1,2,\ldots,k$ whose union gives $k$ centers also have radius at most $2r^*$.
\end{proof}

The communication complexity of \Cref{alg:exist} is $O(k^2L)$ since the total size of coresets that build $C$ is $\sum_{i=1}^L \size{C_i}\leq kL$ and the total size of the coresets that build $T$ is $\sum_{i=1}^L \sum_{\kappa=1}^k \size{T_i}\leq Lk^2$.
The round complexity of \Cref{alg:exist} is $O(1)$: one round to compute the sets $C_i$, for $i=1,2,\ldots,L$, one round to broadcast $C$, one round to compute sets $T_{i,\kappa}$ for $i=1,2,\ldots,L$, $\kappa=1,2,\ldots,k$, and another round to locally compute $T$.
Also, \Cref{alg:exist} is in MPC only if $k^2L=O(m)$.

\section{Conclusion and Open Problems}
We gave a $2$-approximation algorithm in MPC for metric $k$-center by first designing a sequential approximation algorithm that finds a relaxed version of the lexicographically first solution (the first solution in some ordering which can be easier to compute than the lexicographically first order) from the set of possible $2$-approximation solutions for that problem and then using a superset of the points that contained that solution to locally recover it. This resolved the problem of choosing parts of different approximate solutions in different machines, leading to an improved approximation ratio.

\bibliographystyle{unsrt}
\bibliography{refs.bib}

\end{document}